\documentclass[journal]{IEEEtran}

\usepackage{comment}
\usepackage{cite}
\usepackage{amsthm}

\theoremstyle{plain}
\newtheorem{theorem}{Theorem}

\usepackage[table, svgnames, dvipsnames]{xcolor}
\usepackage{lineno,hyperref}
\modulolinenumbers[5]
\usepackage{tabularx}
\usepackage[utf8]{inputenc}

\usepackage{amsmath}
\usepackage{booktabs}
\newtheorem{corollary}{Corollary}
\usepackage{amssymb}
\usepackage{algorithm}
\usepackage{algorithmic}
\usepackage{mathtools}

\usepackage{graphicx}
\usepackage{xspace}
\usepackage{etoolbox}
\usepackage{physics}

\usepackage{tikz}
\usetikzlibrary{calc,positioning}

\usepackage[capitalize,nameinlink]{cleveref}

\crefname{proposition}{Proposition}{Propositions}
\Crefname{proposition}{Proposition}{Propositions}

\hypersetup{
    colorlinks,%
    citecolor=green,%
    filecolor=black,%
    linkcolor=red,%
    urlcolor=blue
}

 \newcommand{\tenp}[1]{\cdot 10^{#1}}

\title{Closed form logical error rate approximations for surface codes}

\author{Shaked~Regev,
        Daniel~Dilley,
        and Ryan~Bennink%
\thanks{Approved for public release; distribution unlimited. The views, opinions, and/or findings expressed are those of the authors and should not be interpreted as representing the official views or policies of DARPA or the U.S. Government.}%
\thanks{This manuscript has been authored by UT-Battelle, LLC, under Contract DE-AC05-00OR22725 with the U.S. Department of Energy (DOE). The U.S. Government retains and the publisher, by accepting the article for publication, acknowledges that the U.S. Government retains a nonexclusive, paid-up, irrevocable, worldwide license to publish or reproduce the published form of this manuscript, or allow others to do so, for U.S. Government purposes. DOE will provide public access to these results of federally sponsored research in accordance with the DOE Public Access Plan: \url{https://www.energy.gov/doe-public-access-plan}.}%
\thanks{Shaked Regev and Ryan Bennink are with Oak Ridge National Laboratory, Oak Ridge, TN 37830 USA.}%
\thanks{Daniel Dilley is with Argonne National Laboratory, Lemont, IL 60439 USA.}%
}

\begin{document}
\maketitle

\begin{abstract}
We propose a novel method to calculate logical error rates in surface codes, assuming independent and identically distributed physical errors. These results fit well known scaling laws for studied physical error rates and code distances, but break down in some other regimes. Increasing the code distance still reduces logical error rates, but returns diminish quickly beyond a certain point. We show how to use our method to analyze hypothetical quantum computers with various configurations and select designs with lower error rates. Currently, this  requires expensive classical simulations of quantum decoders for various distances and physical error rates or inaccurate extrapolation from minimal experimental data. Instead, we use the symmetry of the problem to count the configurations that result in a logical error with our novel software. Given a physical error rate, we can deduce the probability of a logical error to provably good accuracy quickly. We include an analysis of measurement errors to allow for a more complete comparison of different surface code implementations. We extend this counting method to a particular locally-correlated error model and another globally-correlated error model.
\end{abstract}
\begin{IEEEkeywords}
code distance, combinatorics, quantum error correction, path counting, surface code, rotated code 
\end{IEEEkeywords}

\section{Introduction}
\label{sec:introduction}

The planar surface code~\cite{KITAEV20032} is one of the most widely studied and 
accessible QEC codes. Its topological nature and reliance on only nearest-neighbor interactions make it particularly well suited for many quantum architectures. It encodes logical qubits using a two-dimensional array of physical qubits. Stabilizer measurements detect $X$ or $Z$-errors.  We consider the unrotated and rotated variants of planar codes.

In an \textit{unrotated surface code} the physical qubits are arranged on a square lattice, with stabilizers defined over plaquettes to detect both errors. Logical operators act on the encoded qubit as a whole. The logical $X$ operator $X_L$ is implemented by applying $X$ operators along a vertical chain of qubits spanning the lattice. The logical $Z$ operator $Z_L$ corresponds to a horizontal chain of $Z$ operators. 

A \textit{rotated surface code}~\cite{Bombin2007, Horsman2012SurfaceCode} is obtained by rotating the planar lattice $45^\circ$. 
This reduces the number of physical qubits required for a given $d$ by a factor of almost $2$, while preserving QEC capabilities. 
In this geometry, logical operators traverse zig-zag minimum-length paths connecting the lattice boundaries. $X_L$ operators go from top to bottom, while $Z_L$ operators go from left to right. The rotated lattice imposes constraints on possible configurations of physical qubits along these paths. Each logical operator may have multiple minimum-length logical paths (\textbf{MLLP}s). So for a given $d$ rotated codes have higher error rates. Overall, the tradeoff is such that rotated codes require $\approx 75\%$ of qubits of  unrotated codes to achieve similar accuracy~\cite{Orourke2025}.

Surface code decoders process stabilizer measurement outcomes (syndromes) to infer most likely configurations of physical errors. Logical errors arise when different physical error configurations produce identical syndromes, causing the decoder to apply the wrong correction. This effect is particularly significant in rotated surface codes, where overlapping MLLPs may share syndromes, and distinct physical error patterns become indistinguishable to the decoder. 
For sufficiently low physical error rates $p < p_{th}$, where $p_{th}$ is the surface code threshold, the logical error rate decreases exponentially with $d$~\cite{Dennis_2002}. Note that typically $p$ refers to the probability of an error from one physical operation on a physical qubit, not a logical operation on a physical qubit. Most qubits require $\approx 6$ physical operations per logical operation per measurement cycle. 



To design a practical quantum computer and assess its reliability, one must currently simulate every combination of $p$ and $d$ of interest. This is prohibitively expensive, particularly for small $p$ and large $d$. An alternative is to simulate a few points and extrapolate the rest, but this is inaccurate in the same regime~\cite{Fowler_2012}. Neither method can accurately capture the behavior of large-scale quantum systems in a reasonable time. 

Our framework allows analysis of logical error rates in unrotated and rotated surface codes. We first systematically enumerate the error configurations along the MLLPs for any $d$. We can then combine this with any $p$ and can approximately compute the logical error rate $L$ efficiently. Our approach is provably accurate in regimes of interest and avoids resource-intensive simulations.
It gives a rigorous upper bound on the quality of QEC for any imaginable surface code configuration within a minute of serial run time.


We assume that (i) our decoder is classical and perfect, (ii) unless stated otherwise, physical errors are independent (in space and time) and identically distributed (\textbf{i.i.d}) Pauli errors, i.e., they are Markovian, and (iii) measurement errors take a certain form (see \cref{sec:Meas}). \cref{tab:notation} summarizes the notation used in the majority of the paper.  

\begin{table*}[t] 
\caption{\label{tab:notation} Notation, not including notation specific to \cref{sec:locCOR}, \cref{sec:global}, or \cref{sec:Meas} }

\centering
\smallskip
\footnotesize

\begin{tabular}{lll} \toprule
   Variable  & Meaning     & Function relations
\\ \midrule
$d$ & code distance & $d\%2=1$\\
$p$ & physical error rate from one logical operation & \\
$L$ & logical error rate & $L=f(p,d)$ - goal is to determine $f()$\\
$C_k$ & \# configurations of $k$ physical errors resulting in logical error & $C_k=g(d)$ - goal is to determine $g()$\\
$P_k$ & probability of $k$ physical errors resulting in logical error & $P_k=h(p,d)$ - goal is to determine $h()$\\
$d_e$ & minimum \# of physical errors required for a logical error & $d_e\doteq (d+1)/2$\\
$p_{th}$ & threshold below which $L$ decays exponentially in $p$ & $p_{th}>p$
\\ \bottomrule
\end{tabular}
\end{table*}

In \cref{sec:MLLP} we prove a limitation on the predicted scaling of logical error rates given $p,d$~\cite{Fowler_2012} to $p,d$ such that $pd^2\ll 1$. This limitation has not yet been shown experimentally, because $d$ is too small on real quantum hardware. We prove that $pd^2\ll 1$ is sufficient for the most likely configurations to be the only ones that contribute meaningfully to the logical error rate. In \cref{sec:cde}, we present a novel algorithm and software that efficiently and provably accurately calculate the number of most likely physical error configurations that cause logical errors for any $d$. Then, any $p$ can be entered to obtain the corresponding $L$s. 
In \cref{sec:locCOR}, we introduce a locally correlated physical error model and study its effects. In \cref{sec:global}, we extend our model to particular correlated global noise.
In  \cref{sec:Meas}, we show that taking a number of measurements equal to $d$ is sufficient to make measurement errors negligible for rotated surface codes.
\cref{sec:future} summarizes our work and proposes future research directions.

\section{The minimum-length logical path (\textbf{MLLP}) problem}
\label{sec:MLLP}
We consider a surface code with $d^2$ data qubits ($d$), where each qubit experiences an independent and identically distributed (i.i.d.) 
error with probability $p$
Our objective is to derive a provably accurate approximation for $L$.
We can use this form to calculate the error rates for logical qubits with small $p$. Additionally, we can modify it to model correlated errors or measurement errors.
The logical error rate is 
\begin{equation}
\label{eq:logicalError}
    L = \sum_{k=0}^{d^2} C_k p^k (1-p)^{d^2-k}\doteq  \sum_{k=0}^{d^2} P_k,
\end{equation}
where $C_k$ is the number of distinct physical error configurations involving $k$ qubits that result in a logical error. Logical failures occur only when errors collectively form a path that is equivalent to a logical operator. For a surface code of distance $d$, this requires at least $k\geq(d+1)/2 = d_e$ physical errors. Consequently, $\forall k<d_e, C_k=0$. 

If the physical noise model includes $X$ and $Z$-errors on each qubit (i.e., $Y$ errors up to a phase), then 
$C_{d_e}$ is exactly doubled by symmetry. Doubling does not introduce over-counting, because a given error path cannot simultaneously be an $X$-type and a $Z$-type logical path; the two classes of paths are topologically and syndrome-wise disjoint. Without loss of generality (\textbf{wlog}), we restrict attention to $X$-errors. 


 Any configuration with $d_e$ physical errors that creates a logical error can have errors arbitrarily added anywhere on the surface and will create a logical error if there are no error cancellations.
    There are at most $d^2-d_e$ locations for these next errors (exactly $d^2-d_e$ locations neglecting error cancellations). Any set of $k$ errors that did not create a logical error, but did when an error was added, is accounted for by switching this last error with one that was not in the error path. Therefore, $\forall k\geq 0$,  $C_{d_e+k}\leq \binom{d^2-d_e}{k}C_{d_e}\leq \frac{d^{2k}}{k!}C_{d_e}$. So $  \forall k\geq 0$

\begin{align}
P_{d_e+k}< \frac{\left(d^2p\right)^kP_{d_e}}{k!},
    \label{eq:pnext}
\end{align}
and $P_k$s decay exponentially if $pd^2< 1$. So when $pd^2\ll 1$, $L\approx P_{d_e}$. The inequality in \cref{eq:pnext} becomes looser as $k$ and the probability of error cancellation increases, making the approximation $L\approx P_{d_e}$ even better. So truncating \cref{eq:logicalError} at the first non-zero term is accurate in this regime, but perhaps not in the regime $pd^2\gtrsim 1$.

\begin{theorem}
    \label{thm:upplow}
   $\forall d\geq 3:$ $\sum_{k=0}^{d^2-3d_e}\binom{d^2-3d_e}{k}(\frac{p}{1-p})^kP_{d_e}\\\leq L\leq \sum_{k=0}^{d^2-2d_e}\binom{d^2-2d_e}{k}(\frac{p}{1-p})^kP_{d_e}$.
\end{theorem}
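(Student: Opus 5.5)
The plan is to first collapse both sums with the binomial theorem. Using $\sum_{k=0}^{n}\binom{n}{k}x^k=(1+x)^n$ with $x=p/(1-p)$ gives $\sum_{k=0}^{n}\binom{n}{k}(\frac{p}{1-p})^k=(1-p)^{-n}$. Together with $P_{d_e}=C_{d_e}p^{d_e}(1-p)^{d^2-d_e}$, the statement of \cref{thm:upplow} becomes
\begin{equation*}
C_{d_e}p^{d_e}(1-p)^{2d_e}\;\leq\; L\;\leq\; C_{d_e}p^{d_e}(1-p)^{d_e}.
\end{equation*}
So it suffices to prove these two closed-form bounds. Each has a natural probabilistic reading. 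In both, a minimum-weight failing pattern $S$ ($|S|=d_e$, lying on some MLLP) is fully erroneous. The bounds differ in how many further qubits are required to be error-free: $d_e$ of them for the upper bound and $2d_e$ for the lower bound.

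For the upper bound, I would attach to every failing error configuration $E$ a canonical witness. Using the argument preceding \cref{eq:pnext}, any logical failure contains a subset of $d_e$ errors that already fails on its own. Pick $S\subseteq E$ canonically, for example as the lexicographically first such subset along an MLLP. I would then show that the $d_e$ qubits forming the complementary part of that MLLP (the $d-d_e=d_e-1$ remaining path qubits, plus one adjacent qubit fixed by the decoder's tie-breaking rule) must be error-free. The reason is that if one of them carried an error, the perfect decoder would see a lower-weight equivalent correction, or a different canonical witness would be selected. This gives $L\leq\sum_S \Pr[S\subseteq E,\ T_S\cap E=\emptyset]$, where $T_S$ is that set of $d_e$ forced-clean qubits. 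Summing over the $C_{d_e}$ choices of $S$ yields $C_{d_e}p^{d_e}(1-p)^{d_e}$.

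For the lower bound, I would define for each $S$ the event $A_S$: all of $S$ is erroneous, and a set $U_S\supseteq T_S$ of $2d_e$ qubits is error-free. Here $U_S$ is $T_S$ together with a further $d_e$ qubits that isolate $S$, namely the neighbours on alternative MLLPs through $S$ in the rotated code. I would then argue two things. First, each $A_S$ forces a logical error regardless of the remaining $d^2-3d_e$ qubits, because with $U_S$ clean no error outside can cancel $S$ or make the decoder prefer the other coset. Second, the events $A_S$ are pairwise disjoint, because for $S\neq S'$ one has $S'\cap U_S\neq\emptyset$. Disjointness lets me sum the probabilities without inclusion--exclusion, giving $L\geq C_{d_e}p^{d_e}(1-p)^{2d_e}$. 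The free $d^2-3d_e$ qubits are exactly what the binomial sum in the lower bound enumerates.

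The main obstacle is the combinatorial claim behind both steps: I need sets $T_S$ of size exactly $d_e$ and $U_S$ of size exactly $2d_e$ with the forcing and disjointness properties, valid uniformly for all $d\geq 3$ in the rotated geometry where MLLPs overlap. I would first verify it by hand for $d=3$. I would then try to construct $T_S,U_S$ from the local path structure (the zig-zag neighbours of $S$ on its MLLP) and prove the disjointness claim $S'\cap U_S\neq\emptyset$ by a case analysis on where two minimal patterns first differ along the path.
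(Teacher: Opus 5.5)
\emph{Relation to the paper.} Your binomial reduction is correct; it is the paper's \cref{eq:Lboundtight}. Your event formulation is essentially the paper's argument in probabilistic form. The covering lemma implies $C_{d_e+k}\le\binom{d^2-2d_e}{k}C_{d_e}$ for all $k$, and pairwise-disjoint forcing events $A_S$ imply $C_{d_e+k}\ge\binom{d^2-3d_e}{k}C_{d_e}$. These are exactly the two counting assertions (``at most $d^2-2d_e$'' and ``at least $d^2-3d_e$'' locations) in the paper's proof. So everything rests on the lemma you defer, the existence of $T_S$ and $U_S$, and that lemma is false as you formulate it.

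\emph{Upper bound.} Take the paper's own unrotated model (\cref{sec:unrotateMLLP}: failure iff some column carries $\ge d_e$ errors). Then $d_e+1$ errors in one column form a failing $E$. But for every witness $S\subseteq E$, the surplus error lies in the column complement, which is contained in $T_S$. So no event of your union covers $E$, and switching to ``a different canonical witness'' cannot help. No cleverer choice of $T_S$ rescues this. Matching coefficients of $p^{d_e+1}$, the bound $L\le C_{d_e}p^{d_e}(1-p)^{d_e}$ forces $C_{d_e+1}\le(d^2-2d_e)C_{d_e}$. In this model, however, $C_{d_e+1}=d\bigl[(d^2-d)\tbinom{d}{d_e}+\tbinom{d}{d_e+1}\bigr]=(d^2-2d_e)C_{d_e}+d\tbinom{d+1}{d_e+1}$. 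For $d=3$, $L=1-(1-3p^2+2p^3)^3=9p^2-6p^3+O(p^4)$, which exceeds $9p^2(1-p)^2=9p^2-18p^3+9p^4$ for small $p$. So the target inequality itself fails there. Separately, your extra ``tie-breaking'' qubit in $T_S$ has no justification: the complement of $S$ on its MLLP has weight $d_e-1<d_e$, so there is no tie to break.

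\emph{Lower bound.} Your route to disjointness, $S'\cap U_S\neq\emptyset$ for all $S'\neq S$, is impossible with $|U_S|=2d_e$. The witnesses on any MLLP disjoint from $S$ are all $d_e$-subsets of that path, and meeting every one of them costs $d_e$ of its qubits. Already for $d=3$ in the column model, $U_S$ would need the leftover qubit of $S$'s column plus two qubits of each other column, i.e.\ $5>4$. In the genuine surface code, the forcing property fails as well, because logical failure is not monotone in $E$. Suppose the free qubits contain a complete MLLP $\gamma'$; for example, if $S$ lies near one boundary, a straight column on the far side avoids your local $U_S$. Then the syndrome is still that of $S$, and the decoder applies the same correction $c$. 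The residual $(Sc)\gamma'$ is a product of two $X$-logicals, hence a stabilizer, so no logical error occurs and $A_S$ is not contained in the failure event. The paper's proof passes over these same points (``if there are no error cancellations''), so it does not supply the missing lemma. A rigorous version of this counting would have to weaken the claim, e.g.\ to a small-$p$ expansion with explicit error terms, rather than search for the right $T_S$ and $U_S$.
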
 
\begin{proof}
Any set of $d_e+k$ physical errors can create a physical error only if (but not necessarily if) it has a subset of $d_e$ errors that create a logical error. For X errors, error `cancellation' occurs when there are 2 errors that are adjacent horizontally. The error decoder does not correctly discern the errors, but the correction it applies still restores the logical state if one of these errors was necessary to complete a set of $d_e$ errors on an MLLP.

There are at least $d^2-3d_e$ locations to choose $k$ additional errors that do not cancel out the original $d_e$ errors, so $\binom{d^2-3d_e}{k}C_{d_e}\leq C_{d_e+k}$. There are at most $d^2-2d_e$ locations for these errors, so $C_{d_e+k}\leq \binom{d^2-2d_e}{k}C_{d_e}$. Multiplying by the probability of a given configuration with $k$ more errors and summing over all $k$ gives the result.
\end{proof}
Note that the lower bound corresponds exactly to sets with no error on the edge and the upper  bound corresponds exactly to sets where all errors are on the edge.
\begin{corollary}
If $pd\ll 1$, calculating $P_{d_e}$ is sufficient to determine $L$.
\end{corollary}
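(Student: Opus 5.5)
We plan to derive the corollary directly from \cref{thm:upplow}, using the binomial theorem to evaluate both bounds in closed form. Put $q=p/(1-p)$. The sum in the upper bound of \cref{thm:upplow} is
\begin{equation*}
\sum_{k=0}^{d^2-2d_e}\binom{d^2-2d_e}{k}q^k=(1+q)^{d^2-2d_e}=(1-p)^{-(d^2-2d_e)}.
\end{equation*}
The lower bound gives $(1-p)^{-(d^2-3d_e)}$ in the same way. Therefore
\begin{equation*}
(1-p)^{-(d^2-3d_e)}P_{d_e}\le L\le (1-p)^{-(d^2-2d_e)}P_{d_e}.
\end{equation*}
Both bounds are known multiples of $P_{d_e}$. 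So once $P_{d_e}$ (equivalently $C_{d_e}$) is computed, $L$ is confined to an explicit interval.

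The key step is to show that the interval is narrow when $pd\ll 1$. The ratio of the upper bound to the lower bound is
\begin{equation*}
(1-p)^{-d_e}=(1-p)^{-(d+1)/2}.
\end{equation*}
Since $d_e=(d+1)/2\le d$, we have $d_e p\le pd\ll 1$. Using $-\ln(1-p)=p+O(p^2)$, the ratio is $\exp\bigl(d_e p(1+O(p))\bigr)=1+O(pd)$. Hence the two bounds agree to relative error $O(pd)$, and in this regime $L$ is determined by $P_{d_e}$ up to a factor $1+O(pd)$.

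The point needing care is the interpretation of ``determine.'' The prefactor $(1-p)^{-(d^2-2d_e)}\approx e^{pd^2}$ is not close to $1$ unless $pd^2\ll 1$. Our reading is therefore: $L$ is determined by $P_{d_e}$ through the explicit formula
\begin{equation*}
L\approx (1-p)^{-(d^2-2d_e)}P_{d_e}=C_{d_e}\,p^{d_e}(1-p)^{2d_e}.
\end{equation*}
The relative uncertainty of this formula is $O(pd)$, controlled only by the $(1-p)^{-d_e}$ gap between the bounds of \cref{thm:upplow}. If $pd^2\ll 1$ also holds, the prefactor is itself $1+O(pd^2)$ and we recover $L\approx P_{d_e}$, consistent with the discussion after \cref{eq:pnext}.

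Finally, note that the upper bound of \cref{thm:upplow} is being used for all $k$, including large $k$ where additional errors might create new logical paths. We rely on the theorem as stated, so no further work is needed for that.
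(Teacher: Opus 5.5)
Your argument is the paper's proof. You apply the binomial theorem to both sums in \cref{thm:upplow} to get $(1-p)^{3d_e-d^2}P_{d_e}\le L\le (1-p)^{2d_e-d^2}P_{d_e}$, and your explicit check that the ratio of the bounds is $(1-p)^{-d_e}=1+O(pd)$ is exactly what the paper's one-line claim that this sandwich ``is tight when $pd\ll 1$'' amounts to.

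There is one small algebra slip: $(1-p)^{2d_e-d^2}P_{d_e}=C_{d_e}p^{d_e}(1-p)^{d_e}$. The expression $C_{d_e}p^{d_e}(1-p)^{2d_e}$ you wrote is the lower bound instead. The slip is harmless, since the two differ only by the factor $(1-p)^{d_e}=1+O(pd)$.
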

\begin{proof}

    $\sum_{k=0}^{n} \binom{n}{k}a^k=(1+a)^n$. Substituting \\$a=p/(1-p)$ and $n$ for the number of qubits gives a  simplification to \cref{thm:upplow}:
\begin{align}
\label{eq:Lboundtight}
        (1-p)^{3d_e-d^2}P_{d_e}\leq L \leq (1-p)^{2d_e-d^2}P_{d_e}.
\end{align}

\cref{eq:Lboundtight} is tight when $pd\ll 1$.
\end{proof}
\cref{fig:correction} shows $L/P_{d_e}-1$. As this factor increases, $L\approx P_{d_e}$ becomes inaccurate. This shows that truncating \cref{eq:logicalError} can be very inaccurate, even though $L$ can still be exponentially suppressed by increasing $d$.

 \begin{figure}[htbp]  
\centering
\includegraphics[width=\columnwidth]{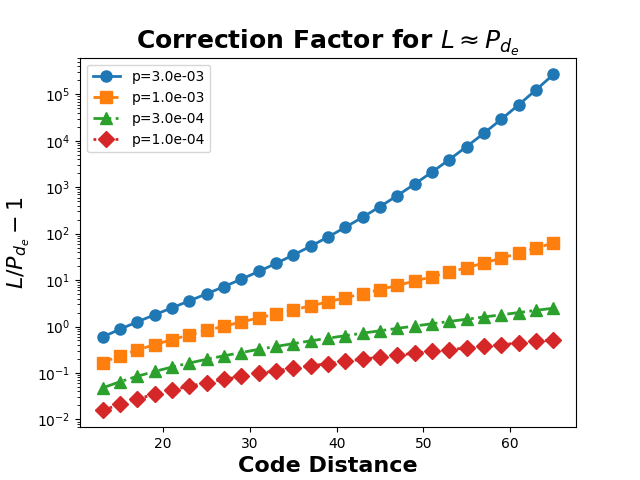}
  \caption{The correction factor increases with the code distance $d$ and physical error rate $p$. Truncating \cref{eq:logicalError} can be orders of magnitude off for large $p$ and $d$.}
  \label{fig:correction}
\end{figure}

\section{Calculating $C_{d_e}$ for surface codes}
\label{sec:cde}
\subsection{Unrotated codes}
\label{sec:unrotateMLLP}
The only subsets of $d$ physical qubits that fail to support $d_e$ errors without inducing a logical error are columns (for $X$-type errors) and rows (for $Z$-type errors). 
A logical error results if in any of the $d$ columns, $d_e$ of the qubits have errors~\cite{Fowler_2012}. 
We add a simplification to this formula (see proof in Appendix \ref{sec:AppendixA}):
\begin{align}
    \label{eq:Ccol}
    C_{d_e} = d\binom{d}{d_e} \approx \sqrt{\frac{d_e}{\pi}}4^{d_e}.
\end{align}
\label{sec:unrotate}
\label{sec:rotate}
For rotated codes, $C_{d_e}$ in \cref{eq:logicalError} is the number of ways $d_e$ physical errors can occur along an MLLP that traverses the lattice horizontally (for $Z$-errors) or vertically (for $X$-errors). For example, in \cref{fig:rotated} $d=5$, so any three errors on an MLLP will produce a logical fault. We will therefore count the number of such error configurations and use it to approximate $L$. \cref{fig:syndrome} illustrates how a configuration with three errors can be mistaken for the more likely configuration containing only two errors on the same chain. $X$ ($Z$)-error chains pass only through green (pink) squares.
\begin{figure}[hptb]   %
\centering
\includegraphics[width=\columnwidth]{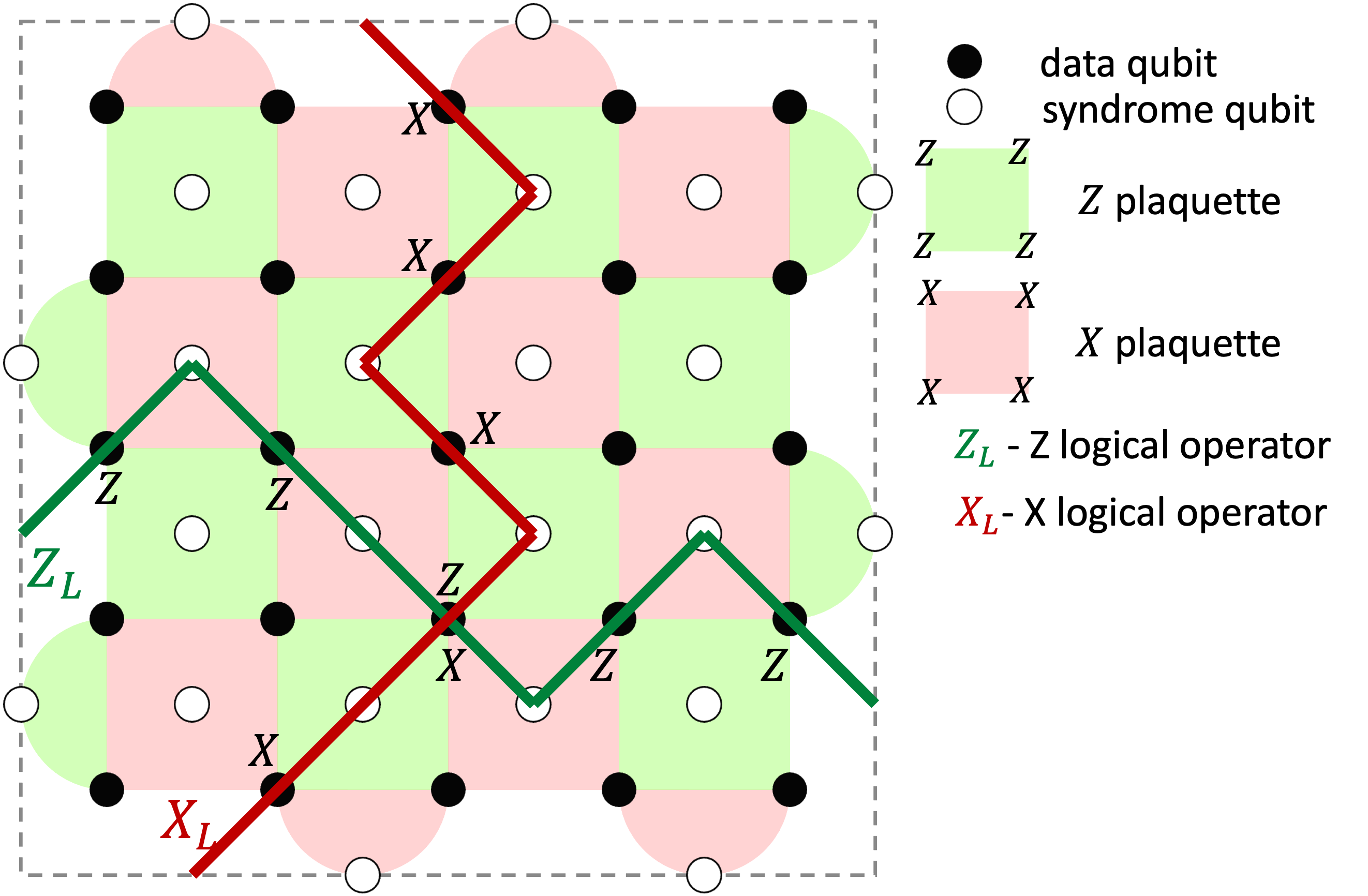}
  \caption{Rotated surface code 
  that encodes a logical qubit using 25 data qubits and 24 syndrome qubits. 
  $X_L$ operators must pass through $Z$ stabilizer plaquettes and $Z_L$ operators must pass through $X$ stabilizer plaquettes. An $X$($Z$) error on a data qubit anticommutes with the neighboring $Z$($X$)-type  stabilizers and flips the parity (i.e., odd or even) of their measurement outcomes. An additional error on a data qubit adjacent to the same plaquette flips the parity of the syndrome qubit again, causing the error to go unnoticed (see \cref{fig:syndrome}.)}
\label{fig:rotated}
\end{figure}
\begin{figure}[hptb]   %
\centering
\includegraphics[width=\columnwidth]{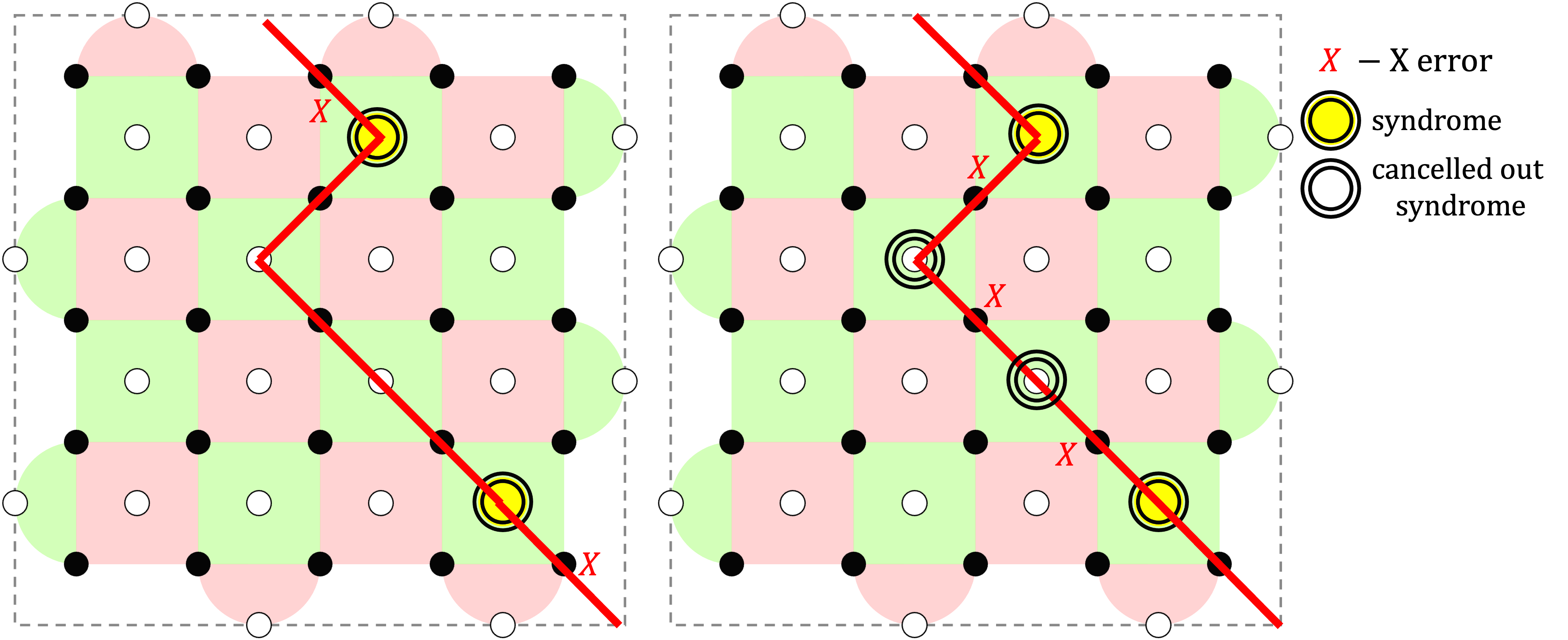}
  \caption{Configurations of 2 and 3 errors that the decoder cannot distinguish. 2 errors on the same plaquette cancel out the syndrome. The left configuration is more likely because it has fewer errors. However, the decoder cannot distinguish these two configurations. So, if the right configuration occurs, the decoder will incorrectly identify it as the left configuration and apply the wrong correction. This results in a logical error.}
\label{fig:syndrome}
\end{figure}

We build on the approach in~\cite{Aliferis2008} of counting paths to estimate the logical error rate. Focusing on $X$-errors only, we obtain the following upper bound on $C_{d_e}$:
\begin{equation}
    C_{d_e}\leq d2^{d-1}\times \binom{d}{d_e}=\frac{1}{4}2^{d+1}\times d\binom{d}{d_e}\leq \sqrt{\frac{d_e}{16\pi}}16^{d_e}.
    \label{eq:rotatedUbound}
\end{equation}
$d2^{d-1}$ is an upper bound on the number of MLLPs. The path may start at any of the $d$ boundary points and may proceed in at most two different ways at each of the subsequent $d-1$ steps. 
This estimate is asymptotically tight because, as $d$ grows, an increasing fraction of these paths remain entirely in the interior of the code and thus experience no boundary-induced constraints. 
\cref{fig:normalizedpaths} demonstrates the accuracy of $d2^{d-1}$ as the approximation of the number of paths, denoted by $N_{paths}$\footnote{Note that this shows that when $d$ is very large, $L$ is much closer to its lower bound in \cref{eq:Lboundtight}. So when $pd\gtrsim 1$, $L\approx (1-p)^{3d_e-d^2}P_{d_e}$ is a good approximation.}.
\begin{figure}[htpb]   
\centering
  \includegraphics[width=\columnwidth]{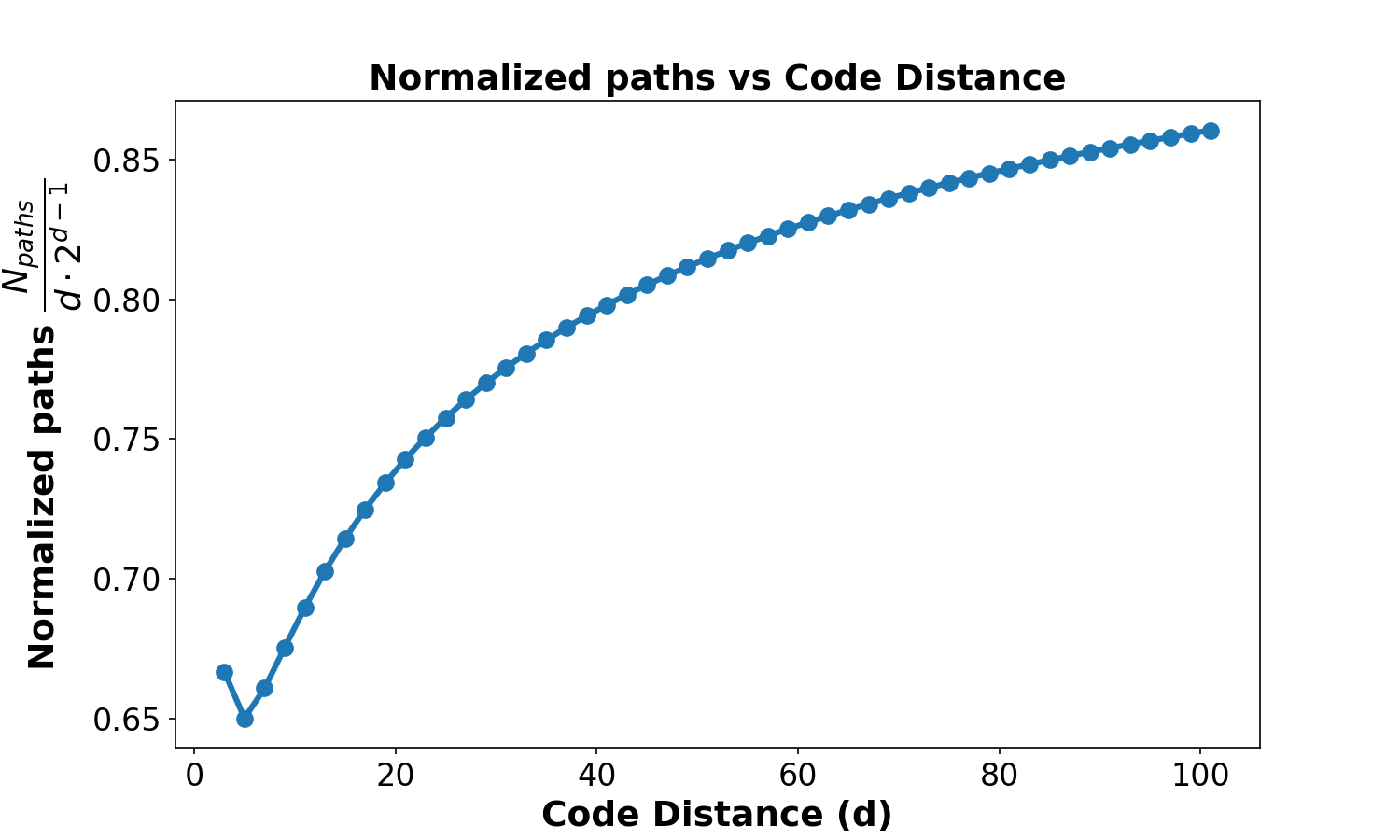}
  \caption{Approximating the number of paths $N_{paths}$ as $d2^{d-1}$ becomes increasingly tight asymptotically.}
  \label{fig:normalizedpaths}
\end{figure}

The second term in ~\cref{eq:rotatedUbound} counts the number of ways to arrange errors along an MLLP. This approximation is a union bound because a fixed set of physical error locations may lie on multiple distinct MLLPs, causing such configurations to be counted more than once. 
As \cref{fig:normalizedpaths} indicates, any asymptotic divergence between the upper bound and the true value of $C_{d_e}$ must originate entirely from this union-bound. 

Similarly, we obtain the following lower bound to $C_{d_e}$:
\begin{equation}
    C_{d_e}\geq \frac{Cd2^{d-1}\times \binom{d}{d_e}}{2^{d_e-1}}=\frac{C2^{d_e}}{2}\times d\binom{d}{d_e}\geq \sqrt{\frac{d_e}{16\pi}}8^{d_e}.
    \label{eq:rotatedLbound}
\end{equation}
The main difference from \cref{eq:rotatedUbound} arises from dividing by $2^{d_e-1}$, the maximum number of MLLPs that can contain the same set of physical errors. The constant $C>0.6$ reflects the edge effects, which become increasingly negligible as $d$ increases (see \cref{fig:normalizedpaths}). The gap between the bounds is large, so we turn to the precise geometric conditions under which a logical error arises. 
\begin{theorem}
\label{thm:firstcoefficient}
   Precisely $d_e$ physical errors produce a logical error if and only if one of two conditions holds for every pair of  physical errors that is consecutive vertically:
   \begin{itemize}
       \item They are closer horizontally than they are vertically.
       \item They are diagonal on the grid (equidistant vertically and horizontally) and the diagonal path connecting them passes only through  green plaquette(s). 
   \end{itemize}
\end{theorem}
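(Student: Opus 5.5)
The plan is to reduce the statement to the earlier characterization that $d_e$ errors cause a logical error exactly when they lie on a common MLLP. Then we characterize geometrically when a set of $d_e$ points in the rotated lattice can be threaded by one MLLP.

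First, fix the geometry. Index data qubits by $(r,c)$, $1\le r,c\le d$. An $X$-type MLLP is a top-to-bottom chain of $d$ data qubits with exactly one per row. Consecutive qubits $(r,c)$ and $(r+1,c')$ must share a green ($Z$-stabilizer) plaquette. In the rotated layout this means $|c-c'|\le 1$, and a diagonal step is allowed only when the plaquette between the two qubits is green; vertical steps are always allowed, since each vertical pair of neighbours shares exactly one green plaquette or boundary weight-2 green plaquette. I would verify this step-rule against \cref{fig:rotated}. Next, the reduction. By the discussion preceding \cref{eq:rotatedUbound}, with exactly $d_e$ errors the decoder fails iff the error set $E$ is contained in some MLLP: then the complementary $d-d_e<d_e$ qubits of that path give the same syndrome, and the lighter correction completes a logical operator, as in \cref{fig:syndrome}. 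Conversely, if $E$ lies on no MLLP, the minimum-weight correction together with $E$ is a trivial cycle. I would state this direction by noting that any logical chain has weight $\ge d$, so $E\cup$correction, of weight $<d$, cannot be logical.

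The core is a lemma: $E$ lies on an MLLP iff it has at most one error per row, and for every vertically consecutive pair $(r_1,c_1),(r_2,c_2)$ with $\Delta r=r_2-r_1>0$ and $\Delta c=|c_2-c_1|$, either $\Delta c<\Delta r$, or $\Delta c=\Delta r$ and the forced diagonal is all green. Necessity comes from the step rule. Each row step moves the column by at most one, so $\Delta c\le\Delta r$. If $\Delta c=\Delta r$, every step must be diagonal in the same direction, so the path is forced and every plaquette it crosses must be green. The "one error per row" condition is implicit in the theorem, since two errors in one row would have vertical distance $0$ and cannot be closer horizontally. Sufficiency is by construction. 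Between consecutive errors with $\Delta c<\Delta r$ there is slack: take $\Delta c$ diagonal steps toward the target, interleaved with vertical steps. Because green plaquettes alternate in a checkerboard, a diagonal step in a given direction is available at every other row, so we can schedule the diagonal moves at rows of the correct parity. Above the topmost error and below the bottommost, extend straight vertically to the boundaries. Concatenating these segments gives an MLLP through all of $E$.

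The main obstacle is the scheduling in the $\Delta c<\Delta r$ case. Each needed diagonal must land on a green plaquette, and parity might force two consecutive moves in the same direction. I would handle this by showing that a diagonal in direction $\pm$ from row $r$ is green for exactly one parity of $r+c$. After a vertical step, $r+c$ changes parity; after a diagonal step, it keeps the parity of $r$ relative to the new column shift, which flips the availability. So a single spare vertical step always restores availability, and $\Delta r-\Delta c\ge 1$ spare steps suffice once the parities are checked. I would also check the two boundary columns, where the truncated boundary plaquettes might block a step, and argue that vertical steps along the edge remain legal.
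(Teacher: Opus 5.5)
The paper's proof is only an appeal to the construction of MLLPs in the cited literature. Your reduction-plus-lemma route is the substantive argument the paper omits, but two steps fail as written.

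First, the converse direction of your reduction. You claim that $E$ together with the minimum-weight correction $R$ has weight $<d$. The only available bound is $|R|\le |E|=d_e$, since $E$ itself is a valid correction, and this gives $|E\triangle R|\le 2d_e=d+1$. Even $|R|\le d_e-1$ would give only $\le d$, so there are two cases to handle.
\begin{itemize}
\item Weight exactly $d$. Then $|E|+|R|-2|E\cap R|=2d_e-1$ with $|R|\le d_e$ forces $E\cap R=\emptyset$ and $|R|=d_e-1$. So $E\cup R$ is a minimum-weight logical, i.e.\ an MLLP containing $E$, a contradiction.
\item Weight $d+1$. This is a disjoint, equal-weight $R$, i.e.\ a decoder tie, and it needs an idea your proposal lacks. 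Every $X$-stabilizer has even weight ($4$ in the bulk, $2$ on the boundary), while the column representative of $X_L$ has odd weight $d$. Hence every $X$-type logical has odd weight, and no weight-$(d+1)$ logical exists.
\end{itemize}
Equivalently, each data qubit joins a green plaquette just above it to one just below it (or to the top/bottom boundary). So the decoding graph is layered by rows, and every top-to-bottom chain has length $\equiv d\pmod 2$. This layering is also what justifies your claim that minimum-weight logicals are exactly one-per-row chains. Without it, your ``only if'' direction depends on how the decoder breaks ties. In the forward direction, you should likewise note that \emph{every} minimum-weight correction $R'$ lies in the wrong class, not just $P\setminus E$. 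Otherwise $R'+(P\setminus E)$ would be a logical of weight at most $2d_e-2=d-1$.

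Second, the parity bookkeeping in your scheduling step is wrong. A diagonal step changes $r+c$ by $0$ or $2$, so it \emph{preserves} the parity of $r+c$. It therefore preserves the availability of the next diagonal in the same direction; only a vertical step flips it. Your remark that a diagonal is ``available at every other row'' holds only along a fixed column. If a diagonal step really flipped availability, two problems would follow:
\begin{itemize}
\item the second bullet would be impossible whenever $\Delta r\ge 2$;
\item one spare vertical step would not suffice once $\Delta c$ is large compared with $\Delta r-\Delta c$.
\end{itemize}
The correct schedule is:
\begin{enumerate}
\item if the first needed diagonal is not green, take one vertical step;
\item take all $\Delta c$ diagonal steps consecutively;
\item take the remaining vertical steps.
\end{enumerate}
This path stays between columns $c_1$ and $c_2$, so the boundary columns never interfere. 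The same invariance shows that the plaquettes along the diagonal in the second bullet all share one colour, so ``all green'' just means the first one is green.
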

\begin{proof}
    True by construction of MLLPs on rotated codes~\cite{Bombin2007,Horsman2012SurfaceCode, Orourke2025}. See also \cref{fig:syndrome}.
\end{proof}
The code we provide in \href{https://github.com/ORNL/QuantumChains}{this Github repository} computes $C_{d_e}$ exactly and independently of $p$. \cref{alg:errorPatterns} provides a simplified overview of this method. 
\begin{algorithm*}[htbp]
\caption{Count Error Patterns. The initial call is with $n=d_e$. See code \href{https://github.com/ORNL/QuantumChains}{here}}
\begin{algorithmic}[1]
\label{alg:errorPatterns}
\STATE \textbf{function} IsValidTransition($prev\_row, prev\_col, row, col$)
\IF{$row \leq prev\_row$}
    \STATE \textbf{return} false
\ENDIF
\STATE $dx \gets col - prev\_col$, $dy \gets row - prev\_row$
\IF{$dy < |dx|$}
    \STATE \textbf{return} false
\ENDIF
\IF{$dy = |dx|$  } 
\IF{$(row+col)\% 2 ==0$ \textbf{and} $dx < 0$}
    \STATE \textbf{return} false
\ENDIF
\IF{$(row+col)\% 2 ==1$ \textbf{and} $dx > 0$}
    \STATE \textbf{return} false
\ENDIF
\ENDIF
\STATE \textbf{return} true
\STATE
\STATE \textbf{function} CountMLLP($d, curr\_row, curr\_col, rem\_errs$)
\IF{$rem\_errs = 0$}
    \STATE \textbf{return} 1
\ENDIF
\IF{state is memoized}
    \STATE \textbf{return} memoized value
\ENDIF
\STATE $count \gets 0$
\FOR{$next\_row = curr\_row + 1$ \textbf{to} $d - 1$}
    \FOR{$next\_col = 0$ \textbf{to} $d - 1$}
        \IF{IsValidTransition($curr\_row, curr\_col, next\_row, next\_col$)}
            \STATE $count \gets count +$ CountMLLP($d, next\_row, next\_col, rem\_errs - 1$)
        \ENDIF
    \ENDFOR
\ENDFOR
\STATE memoize $count$
\STATE \textbf{return} $count$
\STATE
\STATE \textbf{function} CountErrorPatterns($d, n$)
\STATE $total \gets 0$
\FOR{$start\_row = 0$ \textbf{to} $d - n$}
    \FOR{$start\_col = 0$ \textbf{to} $d - 1$}
        \STATE $total \gets total +$ CountMLLP($d, start\_row, start\_col, n - 1$)
    \ENDFOR
\ENDFOR
\STATE \textbf{return} $total$
\end{algorithmic}
\end{algorithm*}
If $P_{d_e}$ dominates \cref{eq:logicalError}, our numerical results agree with the well known scaling law from Monte Carlo quantum error simulations~\cite{Fowler_2012}
\begin{equation}
    \label{eq:scaling}
    L\approx A\left(\frac{p}{p_{th}}\right)^{d_e},
\end{equation}

with $A\approx2.09\tenp{-1}$ and $p_{th} \approx7.33\tenp{-2}\doteq p_{t-r}$. \cref{fig:scalingfit} shows this agreement for $p=10^{-4}$. Recall that our definition of $p$ matches the literature for $p\leftarrow 6p$. Equivalently, $p$ can remain the same and $p_{t-r}\leftarrow p_{t-r}/6$. So $p_{t-r}\approx 1.22\tenp{-2}$, matching the estimated threshold values of $0.5\%-1\%$ closely. The small discrepancy comes from our assumptions of idealized decoding and measurements, which do not hold in practice.
\begin{figure}[htbp]   
\centering
  \includegraphics[width=\columnwidth]{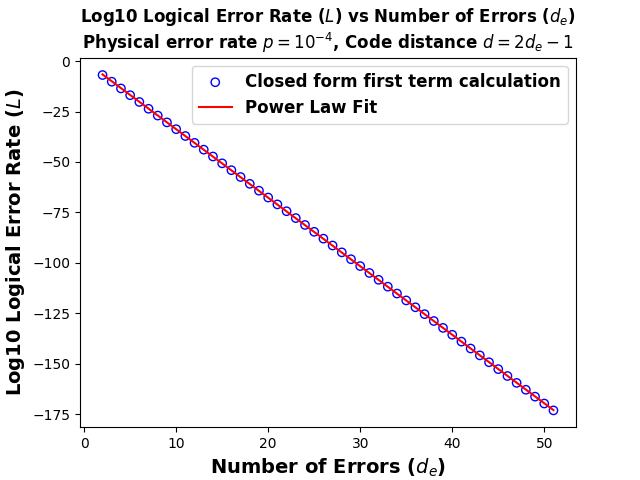}
  \caption{The code agrees with \cref{eq:scaling} with $A\approx2.09\tenp{-1}$ and $p_{th} \approx7.33\tenp{-2}$ ($R^2=1-6.32\tenp{-6}$).}
  \label{fig:scalingfit}
\end{figure}

Our upper bound given in \cref{eq:rotatedUbound} corresponds to $A\approx4.04\tenp{-1}$ and $p_{th} \approx6.22\tenp{-2}$. 
Our lower bound in~\cref{eq:rotatedLbound} corresponds to $A\approx4.04\tenp{-1}$ and $p_{th} \approx1.24\tenp{-1}$. 
The actual count is substantially closer to the upper bound than the lower bound.  

We fit \cref{eq:logicalError} to \cref{eq:scaling}, assuming that $P_{d_e}$ dominates the sum. We calculate $P_{d_e}$ by plugging $C_{d_e}$ from \cref{eq:Ccol} into \cref{eq:logicalError}. We get $A\approx 1.62$ and $p_{th}\approx 2.49\tenp{-1}\doteq p_{t-ur}$ for the unrotated code. For a given distance, this is substantially better than the rotated code, but the rotated code can have a distance larger by a factor of $\approx\sqrt{2}$. We compare the two and see that if $p^{(\sqrt{2}-1)d_e}\lesssim 7.75p_{t-r}^{\sqrt{2}d_e}/p_{t-ur}^{d_e}$, the rotated code has lower logical error rates per qubit count. This gives a closed form for the experimental results from~\cite{Orourke2025} and shows the limits of their extrapolation in certain regimes.  

Note that our analysis in \cref{sec:MLLP} did not depend on whether the code was unrotated or rotated. So $L\approx P_{d_e}$ if $pd^2\ll 1$, but the approximation becomes increasingly loose as $pd^2\gtrsim 1$. Combined, \cref{alg:errorPatterns} and  \cref{eq:Lboundtight} give a provably accurate estimate for $L$, even when $pd^2\gtrsim 1$. So, they are a computationally tractable version of \cref{eq:scaling} that matches it where it is correct and a more accurate version where it is not.


\section{A tractable, locally correlated model}
\label{sec:locCOR}
Suppose that an error on one qubit increases the probability of an error on neighboring qubits from $p$ to $ap$, with $a\geq1$. $a=1$ corresponds to the completely uncorrelated model. $a>1$ is plausible because local disturbances can affect a few adjacent qubits, or an error itself propagates via entanglement. 
If we were to take measurements sequentially, the probability of an error on the current qubit is $p$ if it is measured first or if the previous qubit had an error and $ap$ if the previous qubit had an error. This is unrealistic because errors are decoded simultaneously. 

We therefore symmetrize the resulting probability distribution by averaging probabilities for sequences with the same number of clusters of errors $q$. A cluster of errors is any non-empty sequence of $X$s (wlog $Z$s) on an MLLP. \cref{tab:3qubitrep} shows an example for the symmetrizing process on a 3 qubit repetition code. Note that $XXI$ and $IXX$ have one error cluster and are considered equally probable, while $XIX$ is considered less probable, because it has two clusters of errors.

\begin{table*}[htbp] 
\caption{\label{tab:3qubitrep} Different error possibilities and their probabilities for a 3 qubit repetition code with sequential measuring (left first) or symmetrized simultaneous measurement. $X$ denotes an error and $I$ denotes no error. $\rightarrow$ denotes probabilities that change in the symmetrized version. }
\centering
\smallskip
\footnotesize

\begin{tabular}{c|c|c} \toprule
   outcome & sequential measurement & symmetrized simultaneous measurement
\\ \midrule
XXX & $a^2p^3$ & $a^2p^3$\\
XIX & $p^2(1-ap)$&$p^2(1-ap)$\\
XXI & $ap^2(1-ap)$&$\rightarrow ap^2(1-p(a+1)/2)$ \\
IXX & $ap^2(1-p)$& $\rightarrow ap^2(1-p(a+1)/2)$ \\
IIX & $p(1-p)^2$&  $\rightarrow p(1-p)(1-p(2a+1)/3)$\\
IXI & $p(1-p)(1-ap)$&$\rightarrow p(1-p)(1-p(2a+1)/3)$ \\
XII & $p(1-p)(1-ap)$ &$\rightarrow p(1-p)(1-p(2a+1)/3)$ \\
III & $(1-p)^3$&  $(1-p)^3$
\\ \bottomrule
\end{tabular}
\end{table*} 

 In a distance $d$ surface code, assuming only errors on the same MLLP can propagate and cause correlated errors, the probability of a particular error configuration with $k$ errors in $q$ clusters is $p^{k}a^{k-q}\psi(a,p,k,q)$, with $(1-ap)^{d^2-k}\leq \psi(a,p,k,q) \leq (1-p)^{d^2-k}$. If $apd^2\ll 1$, which we expect is true for any useful quantum error correcting code, $\psi(a,p,k,q) \approx 1$. Otherwise, one can calculate $\psi$ via enumeration and averaging with the relevant variables, similarly to \cref{tab:3qubitrep}. \cref{eq:correlatedLogicalError} gives the logical error in the new correlated error model. \begin{equation}
\label{eq:correlatedLogicalError}
    L_a = \sum_{k,q} C_{k,q} p^k a^{k-q}\psi(a,p,k,q) \approx \sum_{k,q} C_{k,q} p^k a^{k-q},
\end{equation}
where $C_{k,q}$ is the number of distinct physical error configurations involving $k$ qubits in $q$ clusters that result in a logical error. Consequently, $\forall q,\;k<d_e\;:C_{k,q}=0$. Defining $C_k\doteq \sum_c C_{k,q}$ \cref{eq:pnext} still applies and the $C_{d_e}$ terms combined will dominate \cref{eq:correlatedLogicalError} if $apd^2< 1$. So $L_a \approx \sum_{q} C_{d_e,q} p^{d_e} a^{d_e-q}=p^{d_e}\sum_{q} C_{d_e,q} a^{d_e-q}$. In the following sections we will calculate $C_{d_e,q}$ for unrotated and rotated codes and calculate $L_a$ approximately when $apd^2 < 1$.

\subsection{Unrotated codes}
\label{sec:corrUnrotated}
We look at $X$ errors which happen only if there are at least $d_e$ errors in any column.
\begin{theorem}
    \label{thm:cdeq}
    For unrotated codes $C_{d_e,q}=d\binom{d_e-1}{q-1}\binom{d_e}{q}$.
\end{theorem}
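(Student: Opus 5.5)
The count factors as (choice of column) times (number of ways to place $d_e$ errors into $q$ clusters within that column). As in \cref{sec:unrotateMLLP}, a configuration of exactly $d_e$ $X$-errors produces a logical error iff all $d_e$ errors lie in a single column. Two different columns cannot both contain $d_e\geq 2$ of only $d_e$ errors, so the $d$ columns give disjoint sets of configurations. This yields the factor $d$, and it remains to show that a fixed column of length $d=2d_e-1$ admits exactly $\binom{d_e-1}{q-1}\binom{d_e}{q}$ binary strings with $d_e$ ones forming exactly $q$ maximal runs. Here clusters are maximal runs of consecutive errors along the MLLP, which for the unrotated code is the column itself.

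For the one-dimensional count I would use a standard stars-and-bars decomposition. A string with $d_e$ ones in $q$ runs is determined by the run lengths $r_1,\dots,r_q\geq 1$ with $\sum r_i=d_e$, and by the gap lengths. There are $\binom{d_e-1}{q-1}$ compositions of the run lengths. There are $q+1$ gaps $g_0,\dots,g_q$ of zeros, with $g_0,g_q\geq 0$, interior gaps $g_1,\dots,g_{q-1}\geq 1$, and $\sum g_i=d-d_e=d_e-1$. Subtracting one from each interior gap gives $q+1$ nonnegative parts summing to $d_e-1-(q-1)=d_e-q$. The number of such solutions is $\binom{d_e-q+q}{q}=\binom{d_e}{q}$. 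The run and gap choices are independent, so their counts multiply, which gives the claimed formula.

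To guard against slips, I would check two things. First, summing over $q$ with Vandermonde's identity, $\sum_q\binom{d_e-1}{q-1}\binom{d_e}{q}=\sum_q\binom{d_e-1}{q-1}\binom{d_e}{d_e-q}=\binom{2d_e-1}{d_e-1}=\binom{d}{d_e}$, which recovers \cref{eq:Ccol}. Second, for $d=3$ I would compare against \cref{tab:3qubitrep}: $q=1$ gives $2$ (XXI, IXX) and $q=2$ gives $1$ (XIX).

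The main obstacle is conceptual rather than computational. The cluster notion in \cref{sec:locCOR} is defined along an MLLP, so I must justify that for unrotated codes the relevant MLLP is exactly the column, with adjacency meaning vertical adjacency in that column. I also need the claim that exactly-$d_e$ logical errors are precisely the "majority of one column" configurations, which I take from \cref{sec:unrotateMLLP} and~\cite{Fowler_2012}. Once those two points are granted, the rest is routine.
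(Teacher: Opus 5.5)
Your proposal is correct and follows the paper's argument: a factor $d$ for the column, $\binom{d_e-1}{q-1}$ for splitting the $d_e$ errors into $q$ runs, and $\binom{d_e}{q}$ for positioning those runs among the $d_e-1$ error-free qubits. The paper gets the last factor by choosing $q$ of the $d_e$ slots created by the error-free qubits, which is the same count as your shifted stars-and-bars on the gap lengths; your disjointness remark and $d=3$ check are harmless additions.
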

\begin{proof}
    For any one of the $d$ columns, we are splitting $d_e$ errors into $q$ clusters. We are placing $q-1$ clusters of non-errors between the in the $d_e-1$ spaces available among the $d_e$ heads. This gives $\binom{d_e-1}{q-1}$ combinations. There are $d_e$ slots for the non-errors (including the beginning and the end) and we are placing $q$ error clusters among them. This gives   $\binom{d_e}{q}$ combinations. Multiplying these factors gives $C_{d_e,q}=d\binom{d_e-1}{q-1}\binom{d_e}{q}$.
\end{proof}
Vandermonde's identity provides a sanity check by comparing to \cref{eq:Ccol}
\begin{align*}
    &C_{d_e}=\sum_{q=1}^{d_e} C_{d_e,q}=d\sum_{q=1}^{d_e} \binom{d_e-1}{q-1}\binom{d_e}{q}\\=
    &d\sum_{q=1}^{d_e} \binom{d_e-1}{d_e-q}\binom{d_e}{q}=d\binom{d}{d_e}.
\end{align*}

Plugging this into our approximation for \cref{eq:correlatedLogicalError} gives
\begin{align}
\label{eq:correlatedLogicalErrorunrotated}
L_a \approx p^{d_e}d\sum_{q=1}^{d_e} \binom{d_e-1}{q-1}\binom{d_e}{q} a^{d_e-q}\nonumber\\=
p^{d_e}d\sum_{j=0}^{d_e-1} \overbrace{\left(1-\frac{j}{d_e}\right)\binom{d_e}{j}^2 a^{j}}^{K(j,a,d_e)}.
\end{align}
For given $p$, $d$, and $a$, we can now approximate $L_a$ to provable accuracy when $apd^2\ll 1$. The terms in the sum peak in the middle for $a=1$ and closer to the end for $a>1$. We will approximate the sum in \cref{eq:correlatedLogicalErrorunrotated} to get an idea of how it scales. We first find $K_{\max}(a,d_e) \doteq \max_j K(j,a,d_e)$. By Stirling's approximation, 
\begin{align*}
   &ln{K(j,a,d_e)}\approx\ln{(1-j/d_e)}+\\
   &2\left(d_e\ln{d_e}-(d_e-j)\ln{(d_e-j)}-j\ln{j}\right)+j\ln{a}. 
\end{align*}

Dividing by $d_e$ and changing variables $x=j/d_e$ ($0\leq x < 1$) gives
\[
\frac{\ln{K}}{d_e}=\frac{\ln{(1-x)}}{d_e}\overbrace{-2\left(x\ln{x}+(1-x)\ln{(1-x)}\right)+x\ln{a}}^{f(x)}.
\]
The first term is negligible for large $d_e$. We set the derivative of $f(x)$ to $0$
\[
f'(x)=2\ln{\left(\frac{1-x}{x}\right)}+\ln{a}=0. 
\]
$f(x)$ is therefore maximized at $x_0(a) = \sqrt{a}/(1+\sqrt{a})$. $x_0(1)=0.5$ confirms these approximations are reasonable because the binomial is maximized in the middle. Plugging this back into $K_{\max}(a,d_e)$ and using Stirling's approximation gives
\begin{equation}
    \label{eq:kmax}
    K_{\max}(a,d_e) \approx \frac{e^{d_ef(x)}}{2\pi d_e x} = \frac{(\sqrt{a}+1)^{2d_e+1}}{2\pi d_e \sqrt{a}}.
\end{equation}
If $d_e$ is large, the sum in \cref{eq:correlatedLogicalErrorunrotated} is approximately Gaussian and we can use Laplace's approximation if we have the width of the distribution. The width of the distribution is $\sqrt{\frac{2\pi d_e}{\abs{f''(x_0)}}}$. 
Using $
f''(x_0)=\frac{2}{x(x-1)}|_{x=\frac{\sqrt{a}}{\sqrt{a}+1}}=-\frac{2(\sqrt{a}+1)^2}{\sqrt{a}}
$ and \cref{eq:kmax} in \cref{eq:correlatedLogicalErrorunrotated} we get 
\begin{equation*}
L_a \approx p^{d_e}\frac{d(\sqrt{a}+1)^{2d_e}}{2\sqrt{\pi d_e}a^{1/4}}\lesssim L_a \approx p^{d_e}\sqrt{\frac{d_e}{\pi}}\frac{(\sqrt{a}+1)^{2d_e}}{a^{1/4}}.
\end{equation*}
For $a=1$ this matches the result in \cref{sec:unrotateMLLP}. Using \cref{eq:scaling} we can interpret this result as $p_{th}\approx (\sqrt{a}+1)^{-2}$. This shows that even a moderate $a$ such as $a\geq 2$ can results in order(s) of magnitude more frequent logical errors. Further, any  practical quantum error correcting code should have $a\lesssim 10$. 

\subsection{Rotated codes}
\label{sec:rotateCorr}
Expanding \cref{alg:errorPatterns} to distinguish between configurations that have the same number of errors, but are in different clusters, requires merely keeping track of the number of clusters and for every configuration. 

 \begin{figure}[thbp]  
\centering
\includegraphics[width=\columnwidth]{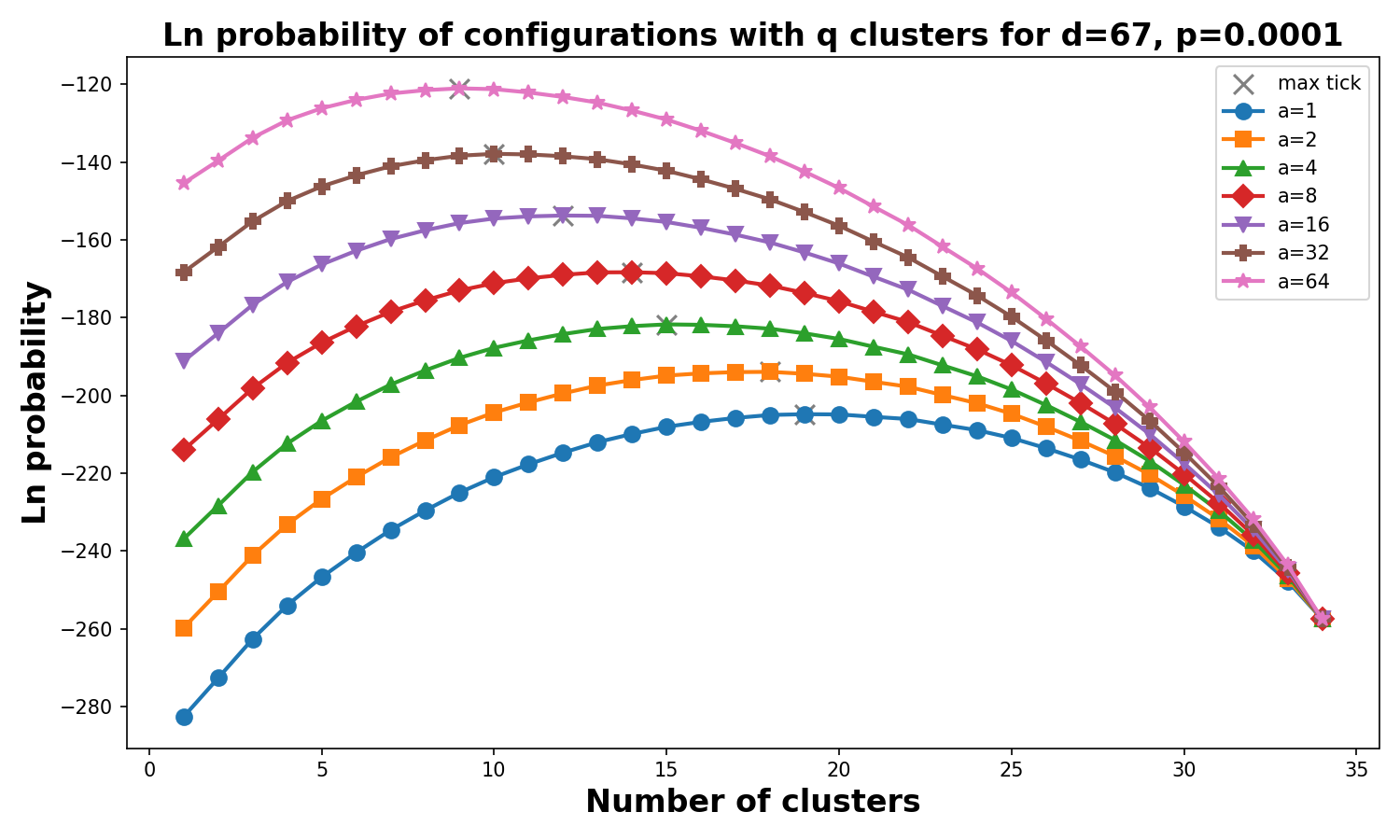}
  \caption{Ln probability of different configurations with $d_e$ errors for $d=67$, $p=10^{-4}$, and varying $a$. As $a$ increases, errors are more likely to cluster together.}
  \label{fig:configsa}
\end{figure}

\cref{fig:configsa} shows the likelihood of different configurations of error clusters with $d_e$ errors for varying $a$, $d=67$, and $p=10^{-4}$. The maximum starts at $19$ of $34$. This is slightly higher than the middle, because in a rotated code, spread out configurations of errors on MLLPs are slightly less restricted, and therefore, more likely. Increasing $a$ increases the average size of clusters, and therefore decreases the number of clusters. 
 \begin{figure}[thbp]  
\centering
\includegraphics[width=\columnwidth]{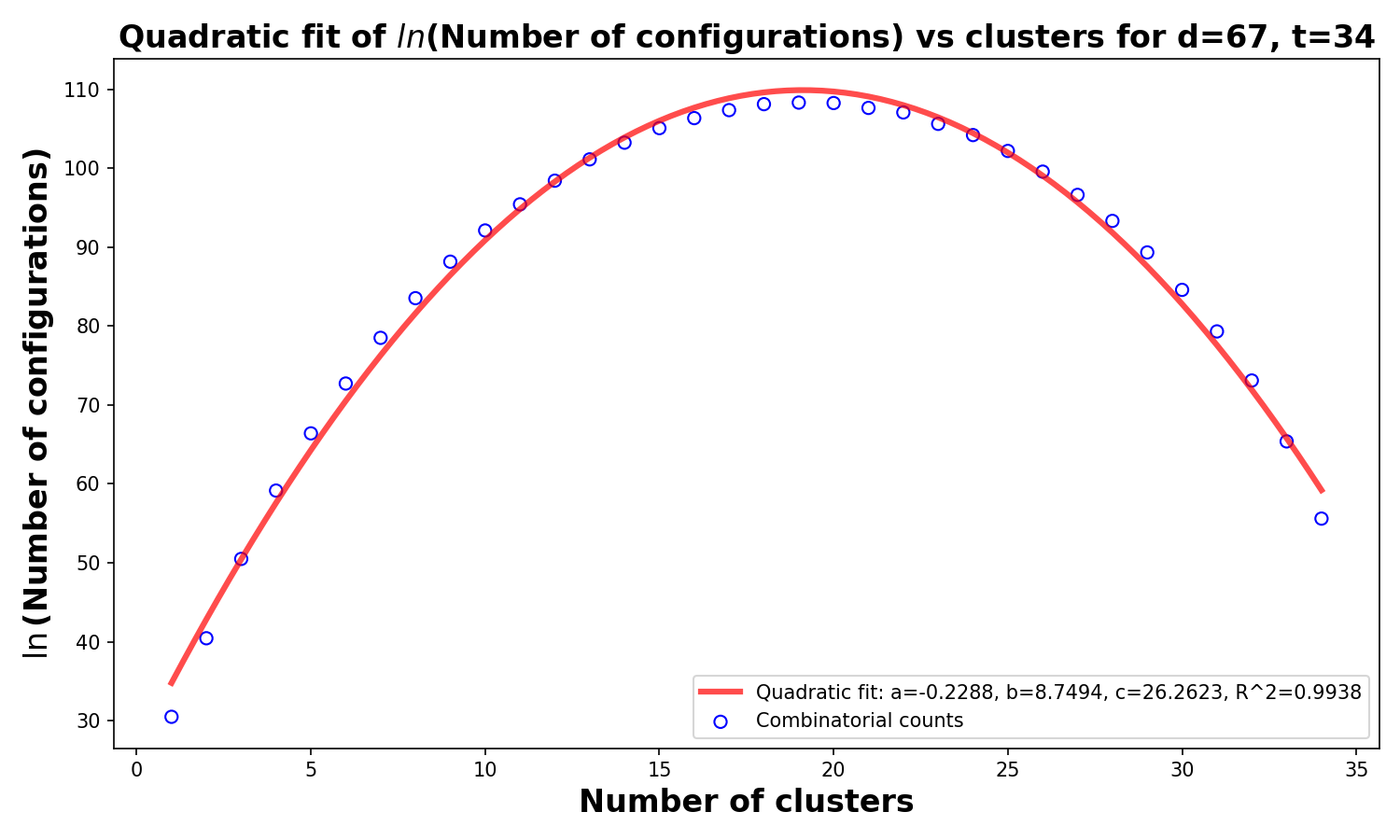}
  \caption{Quadratic fit for $\ln$(number of configurations) as a function of the number of clusters. The distribution is close to Gaussian. }
  \label{fig:configsfit}
\end{figure}

\cref{fig:configsfit} shows the number of configurations with with $d_e$ errors and $d=67$. The high $R^2$ confirms the quality of the Gaussian approximation, even for the rotated code.

\section{Globally correlated errors}
\label{sec:global}
We can extend the approximation of the logical error to a simple global correlated model. Suppose that instead of $p$ being drawn from a Bernoulli distribution, it is drawn from one of $D$ Bernoulli distributions, each occurring with a certain probability \textbf{for all qubits simultaneously}. Meaning

\begin{equation}
    \label{eq:correlatedP}
    p_{dep} = \sum_{j=1}^D \rho_j p_j, \quad \sum_{j=1}^D \rho_j = 1
\end{equation}
where $\rho_j$ is the probability of the $j$th distribution occurring, and $p_j$ is the physical error rate associated with that distribution. This type of distribution models errors occurring due to unfavorable environments such as those with fluctuating temperatures, manufacturing line defects, or in the presence of cosmic rays~\cite{li2025cosmic}. The form for the logical error is in this case: 
\begin{equation}
\label{eq:logicalErrorDependent}
    L_{dep} = \sum_{k=d_e}^{N} C_k \sum_{j=1}^D  \rho_j p_j^k (1-p_j)^{N-k}.
\end{equation}
Wlog, we can assume $\forall j$ $p_j>p_{j+1}$. Unless $\rho_1$ is very small, i.e. $\exists j$ where $\rho_1 p_1^{d_e}<\rho_j p_j^{d_e}$, we see that $j=1$ dominates the rest of \cref{eq:logicalErrorDependent}. In any case, there is some $d$ beyond which $j=1$ dominates the sum. Therefore, the scaling of any logical qubit error rate is dominated by $p_1$ at large enough code-distances, modulated by $\rho_j$. This shows that $L$ scales with the worst environmental physical error rate $p_1$ in \cref{eq:scaling}, but the constant $A$ is modulated by a factor $\rho_1$. Quantum computer designers should therefore consider worst-case, rather than mean or median, physical error rates when designing error correcting codes. 
A globally correlated error with $\rho_1=1/a$, $p_1=ap$, $\rho_2=1-1/a$, and $p_2=0$, has $p_{th}\approx (4a)^{-1}$. For large $a$, our globally correlated model has a threshold approximately $1/4$ of the locally correlated model in \cref{sec:corrUnrotated}.

\section{Measurement errors}
\label{sec:Meas}
Our analysis so far neglected measurement errors on the ancilla qubits for correction. We can repeat each calculation some odd $M\geq 3$ times independently, measure each time, and majority vote to decide the correct value of the ancilla qubit. We define $M_e\doteq (M+1)/2$. If we have $<M_e$ measurement errors on an ancilla qubit, we can trust it to perform as expected. Let $p_M$ be the probability that a given ancilla qubit has $\geq M_e$ measurement errors. We assume that a logical error due to measurement errors occurs if and only if at least one ancilla qubit has $\geq M_e$ measurement errors. This approximation is good when $p_Md^2\ll 1$, because measurement errors are unlikely to occur on multiple qubits in precisely a way that cancels out.

We follow the approach in \cite{Dennis_2002}, extending time to a third (upward Z) dimension of our surface code. A logical error occurs  when a majority of qubits in a tube parallel to the Z axis error. This behavior is identical to the other 2 dimensions in the unrotated code ~\cite{Dennis_2002}, but different than in the rotated code. Assuming the probability for a measurement error on any qubit is $p_m$ i.i.d, $M$ measurements are taken on each ancilla, and $N_a=\Omega(d^2)$ is the number ancilla qubits,
\begin{align}
    \label{eq:pMeasExact}
    &L_{M} = 1-\left[1-\sum_{k=M_e}^M \binom{M}{k}p_m^k(1-p_m)^{M-k}\right]^{N_a}\nonumber \\
    &\doteq1-\left[1-p_M\right]^{N_a}. 
\end{align}
If $p_m\ll 1$, $p_M\approx \binom{M}{M_e}p_m^{M_e}(1-p_m)^{M_e-1}$ because $\forall m\geq M_e, \; \binom{M}{k}\geq  \binom{M}{k+1}$. If $N_ap_M\ll 1$, $L_M \approx N_a\binom{M}{M_e}p_m^{M_e}$. Using \cref{eq:Stirling}, 
\begin{align*}
 L_M\approx \frac{N_a }{\sqrt{4M_e\pi}}(4p_m)^{M_e}\approx  \frac{M_e^{1.5}}{\sqrt{4\pi}}(4p_m)^{M_e}.   
\end{align*}

We want to select $M$ such that measurement errors are less frequent than data errors.  For an unrotated code, based on \cref{eq:Ccol}, $L_{ur}\approx  \sqrt{\frac{d_e}{\pi}}(4p)^{d_e}$. The exponential scaling in $d_e$ is identical to the corresponding scaling in $M_e$. 
For rotated code, based on \cref{fig:scalingfit} and our code, $L_r\approx A(p/p_{th})^{d_e}$ with $A\approx 2.09\tenp{-1}$ and $p_{th}\approx 7.33\tenp{-2}$. The exponential scaling in $d_e$ has a larger base than the corresponding scaling in $M_e$. 

For given $p, \;p_m,\; d$, we can calculate the necessary $M$ to ensure $L_M<L$. $M=d$, and $p_m=p$ is a particularly interesting case, because the dependence on $p,\; p_m$  vanishes. 
We compare the terms and conclude that $\forall p$ (that fulfills our other assumptions) $M=d$ is insufficient for $L_{ur}<L_M$, but sufficient for $L_{r}<L_M$.

\section{Summary and future work}
\label{sec:future}
We introduce a novel recursive algorithm for calculating logical error rates for rotated codes when $pd^2\ll 1$ by counting MLLPs. It calculates $C_{d_e}$ as a function of $d$, but runs in seconds even for large $d$. We can plug the result for any $p$ into \cref{eq:logicalError}. When $pd^2\gtrsim 1$ it provides a lower bound to the logical error. We further derive analytical lower and upper bounds to estimate error rates when $pd^2\gtrsim 1$. These bounds show that increasing the code distance is helpful, but less so than known scaling laws would predict when $pd^2\gtrsim 1$. The lower and upper bounds are tight when $pd<1$. 

We show how to expand our algorithm to account for a particular form of locally correlated errors. We show how to analytically accurately extrapolate the results of our algorithm to account for globally correlated errors.
We show how to extrapolate our analysis for a simple model with measurement errors.

Our method of counting configurations is only valid when the qubits are identical, but it is not necessary for them to be independent. Future work may consider more general qubit dependencies in space and time, more general measurement errors, and more efficient logical encodings such as quantum low-density parity check codes.
Our recursive counting method could be useful for combinatorial problems in other domains. 

\section*{Acknowledgments}
This research was funded by the DARPA Multi X Office's Quantum Benchmarking Initiative, contract number O2508-097-089-117256. The views, opinions and/or findings expressed are those of the authors and should not be interpreted as representing the official views or policies of DARPA or the U.S. Government. We thank James Nutaro, Andrea Delgado, Elaine Wong, Corey Fox, and Nina Gottschling of Oak Ridge National Laboratory for helping verify the methods and derivations in this manuscript.

\begin{appendices}
\section{}
\label{sec:AppendixA}
\begin{align}
 \label{eq:Ccol_proof}
     C_{d_e} = d\binom{d}{d_e}=\frac{d}{2}\binom{d+1}{d_e}\approx \frac{d/2}{\sqrt{\pi d_e}}2^{d+1}\lesssim  \sqrt{\frac{d_e}{\pi}}4^{d_e}
\end{align}

The simplifications in \cref{eq:Ccol_proof} come from 
\begin{equation}\label{eq:Stirling}
\begin{aligned}
   & \binom{2n}{n}=\frac{2n!}{n!n!}\approx \frac{\sqrt{2\pi 2n}\left(\frac{2n}{e}\right)^{2n}}{\left(\sqrt{2\pi n}\left(\frac{n}{e}\right)^{n}\right)^2}=\frac{1}{\sqrt{\pi n}}2^{2n}, 
\\ & \binom{2n-1}{n}=\frac{1}{2}\frac{2n(2n-1)!}{n!n(n-1)!}=\frac{1}{2}\frac{2n!}{n!n!}=\frac{1}{2}\binom{2n}{n}  
\end{aligned}
\end{equation}

\end{appendices}

\bibliographystyle{plain}
\bibliography{bibfile}
\end{document}